\documentclass[runningheads, envcountsame, a4paper]{article}

\usepackage{amssymb}
\usepackage{amsmath}
\usepackage{ amsfonts} 
\usepackage{ theorem}  
\usepackage{times}
\usepackage{stmaryrd}
\usepackage{mathrsfs}
\usepackage{latexsym}
\usepackage{hhline}

\usepackage{color}
\usepackage{graphicx}
\usepackage[skip=0pt]{caption}
\usepackage{subcaption}

\usepackage{tikz}
\usepackage{lineno}
\usepackage{todonotes}
\usepackage{bm}

\theoremstyle{plain}
\newtheorem{theorem}{Theorem}

\newtheorem{lemma}[theorem]{Lemma}

\newtheorem{proposition}[theorem]{Proposition}

\theoremstyle{definition}
\newtheorem{definition}[theorem]{Definition}
\newtheorem{example}[theorem]{Example}
\newtheorem{remark}[theorem]{Remark}

\newenvironment{proof}{\noindent{\bf Proof:\/}}{$\Box$\vskip 0.1in}

 \def\B{{\tilde{\delta}}}
 
  


\title{ {\Large \bf Isometric Words
{based on} 
\\
Swap and {Mismatch} Distance		}
\thanks{ Partially supported
by INdAM-GNCS Project 2022 and 2023, FARB Project  ORSA229894 
of University of Salerno, TEAMS Project of University of Catania and by  the MIUR Excellence Department Project MatMod@TOV awarded to the Department of Mathematics, University of Rome Tor Vergata.
 }
}
\author{Author One$^1$\thanks{Author One was partially supported by Grant XXX} \and Author Two$^2$ \and Author Three$^1$}

\date{
	$^1$Organization 1 \\ \texttt{\{auth1, auth3\}@org1.edu}\\%
	$^2$Organization 2 \\ \texttt{auth3@inst2.edu}\\[2ex]%
}

\author{ M. Anselmo$^1$ \and  G. Castiglione$^2$ \and  M. Flores$^1$ \and   D. Giammarresi$^3$ \and   M. Madonia$^4$ \and S. Mantaci$^2$
  }%

  \date{	
$^1${ \normalsize Dipartimento di Informatica,
Universit\`a di Salerno, 
Italy. 
\\
{\tt
\{manselmo, mflores \}@unisa.it}} \\
$^2${ \normalsize Dipartimento di Matematica e Informatica, Universit\`a di Palermo, 
Italy  
{\tt \{giuseppa.castiglione, sabrina.mantaci \}@unipa.it }} \\
$^3${ \normalsize Dipartimento di Matematica.
 Universit\`a  Roma ``Tor Vergata''
  Italy. 
{\tt giammarr@mat.uniroma2.it}}\\
$^4${\normalsize Dipartimento di Matematica eInformatica, Universit\`a  di
  Catania, 
  Italy.
  {\tt madonia@dmi.unict.it}} 
}

\hyphenation{two-di-men-sio-nal Two-di-men-sio-nal}
\hyphenation{de-ter-mi-ni-stic}
\hyphenation{non-de-ter-mi-ni-stic}
\hyphenation{lan-gua-ge}
\hyphenation{ge-ne-ra-li-za-tion}
\hyphenation{one-di-men-sio-nal}
\hyphenation{qua-dru-ple}
\hyphenation{con-straints}
\hyphenation{clo-su-re}
\hyphenation{ro-ta-tion}
\hyphenation{dia-go-nal}
\hyphenation{unam-bi-gui-ty}
\hyphenation{si-mi-lar}
\hyphenation{ar-gu-ment}

\begin{document}

\thispagestyle{empty}

 \maketitle

\begin{abstract}
An edit distance is a metric between words that quantifies how two words differ by counting the number of edit operations needed to transform one word into the other one. A word $f$ is said isometric with respect to an edit distance if, for any pair of $f$-free words $u$ and $v$, there exists a transformation of minimal length from $u$ to $v$ via the related edit operations such that all the intermediate words are also $f$-free. The adjective ``isometric'' comes from the fact that, if the Hamming distance is considered (i.e., only mismatches), then isometric words are connected with definitions of isometric subgraphs of hypercubes. 

We consider the case of edit distance with swap and mismatch. We compare it with the case of mismatch only  and prove some properties of isometric words that are related to particular features of their overlaps.

\end{abstract}

{
{\bf Keywords:} {Swap and mismatch distance, Isometric words, Overlap with errors.}
}

\section{Introduction}\label{s-intro}

The edit distance is a central notion in many fields of computer science.
It plays a crucial role in defining combinatorial properties of families of strings as well as in designing many classical string algorithms that find applications in natural language processing, bioinformatics and, in general, in information retrieval problems. The edit distance is a string metric that quantifies how two strings differ from each other and it is based on counting the minimum number of edit operations required to transform one string into the other one.

Different definitions of edit distance use different sets of edit operations. The operations of insertion, deletion and replacement of a character in the string characterize the Levenshtein distance which is probably the most widely known (cf. \cite{lev66}). On the other hand, the most basic edit distance is the Hamming distance which applies only to pair of strings of the same length and counts the positions where they have a mismatch; this corresponds to the restriction of using only the replacement operation. 
For this, the Hamming distance finds a direct application in detecting and correcting errors in strings and it is a major actor in the algorithms for string matching with mismatches (see \cite{galil90}).

The notion of isometric word (or string) combines the 
edit distance  with the property that a word does not appear as factor 
in other words.  Note that this property 
is important in combinatorics 
as well as in the investigation on similarities, or distances, on DNA sequences, where the avoided factor is referred to as an absent word \cite{BMR96,CMR20,CCFMP}. 
 Isometric words based on Hamming distance were first introduced 
 in \cite{IlicK12}
as special binary strings that never appear as factors in some string transformations. 
A string is $f$-\emph{free} if it does not contain $f$ as factor. A word $f$ is isometric if for any pair of $f$-free words $u$ and $v$, there exists a sequence of symbol replacement operations that transform $u$ in $v$ where all the intermediate words are also $f$-free.

Isometric words are connected with the definition of isometric subgraphs of the hypercubes, called generalized Fibonacci cubes.
The hypercube graph $Q_n$ is a graph whose vertices are the (binary) words of length $n$, and two vertices are adjacent when the corresponding words differ in exactly one symbol. Therefore, the distance between two vertices is the Hamming distance of the corresponding vertex-words. Let $Q_n(f)$ be the subgraph of $Q_n$ which contains only vertices that are $f$-free. Then, if $f$ is isometric, the distances of the vertices in $Q_n(f)$ are the same as calculated in the whole $Q_n$. Fibonacci cubes have been introduced by Hsu in \cite{Hsu93} and correspond to the case with $f=11$. 
In \cite{IlicK12,KlavzarS12,Wei17,WeiYZ19,Wei16} the structure of non-isometric words for alphabets of size 2 and Hamming distance is completely characterized and related to particular properties on their overlaps. The more general case of alphabets of size greater than 2 and Lee distance is studied in \cite{AFMWords21,MMM22,AnselmoFM22}.  
Using these characterizations, in \cite{BealC22} some linear-time algorithms are given to check whether a binary word is Hamming isometric and, for quaternary words, if it is Lee isometric. 
These algorithms were extended to provide further information on  non-isometric words, still keeping linear complexity in \cite{MMM22}.
Binary Hamming isometric two-dimensional words have been also studied in \cite{AGMS-DCFS20}.

Many challenging problems in correcting errors in strings 
come from computational biology. Among the chromosomal operations on DNA sequences, in gene mutations and duplication, it seems natural to consider  the {\em swap} operation, consisting in exchanging two adjacent symbols. 
The  Damerau-Levenshtein distance  adds also the swap to all edit operations.  In \cite{Wagner75}, Wagner proves that the edit distance with
insertion and swap is NP-hard, while each separate case  can be solved in polynomial time. Moreover, the general edit distance with 
insertion, deletion, replacement, and swap, is polynomially solvable. The 
swap-matching problem has been considered in \cite{AmirCHLP03,FaroP18}, and  algorithms for 
computing the corresponding edit distance are given in \cite{AmirEP06,DombbLPPT10}.

In this paper, we study the notion of isometric word using the edit distance based on swaps and mismatches. This distance will be referred to by using the {\em tilde} symbol that somehow evokes the swap operation. 
The tilde-distance ${\rm dist}_\sim(u,v)$ of
  equal-length words $u$ and $v$ is the minimum number of 
  replacement and swap operations to transform $u$ to into $v$. 
Then, the definition of {\em  tilde-isometric} word comes in a very natural way. A word $f$ is tilde-isometric if for any pair of equal-length words $u$ and $v$ that are $f$-free, there is a transformation from $u$ to $v$ that uses exactly dist$_\sim(u,v)$ replacement and swap operations and such that all the intermediate words still avoid $f$. It turns out that adding the swap operation to the definition makes the situation 
more complex, {but interesting for  applications}. It is not a mere generalization of Hamming string isometry since special situations arise. A swap operation in fact is equivalent to two replacements, but it counts as one when  computing the tilde-distance. Moreover, there could be different ways to transform $u$ into $v$ since particular triples of consecutive symbols can be managed, from left to right, either by first a swap and then a replacement or 
by a replacement and then a swap. 
We present some examples of tilde-isometric words that are not Hamming isometric and vice versa.  By definition, in order to prove that a given string $f$ is not  tilde-isometric one should exhibit a pair of $f$-free words $(\tilde\alpha, \tilde\beta)$ such that any transformation from $\tilde\alpha$ to $\tilde\beta$ of length dist$_\sim(\tilde\alpha, \tilde\beta)$ comes through words that contain $f$. Such a pair is called pair of {\em tilde-witnesses} for $f$.  We prove  some necessary conditions for $f$ to be non-isometric based on the notion of error-overlap and give an explicit construction of the tilde-witnesses in many cases.

\section{Preliminaries}
Let $\Sigma$ be a finite alphabet.
A word (or string) $w$ of length $|w|=n$, is $w=a_1a_2\cdots a_n$, 
where $a_1, a_2, \ldots, a_n$ are symbols in $\Sigma$. The set of all words over $\Sigma$ is denoted $\Sigma^*$ and the set of all words over $\Sigma$ of length $n$ is denoted $\Sigma^n$. Finally, $\epsilon$ denotes the {\em empty word} and $\Sigma^+=\Sigma^* - \{\epsilon\}.$
For any word $w=a_1a_2\cdots a_n$, the {\em reverse} of $w$ is the word $w^{rev}=a_na_{n-1}\cdots a_1$. If $x \in \{0, 1\}$, we denote by $\overline{x}$ the opposite of $x$, i.e $\overline{x}=1$ if $x=0$ and viceversa. 
Then we define {\em complement} of $w$ the word $\overline{w}=\overline{a}_1\overline{a}_2\cdots \overline{a}_n$.

Let 
  $w[i]$ denote the symbol of $w$ in position $i$, i.e. $w[i]=a_i$.
  Then, $w[i .. j] = a_i \cdots a_j$,
   for $1\leq i\leq j\leq n$, is a \emph{factor}  of $w$. 
  The \emph{prefix}  (resp. \emph{suffix}) of $w$ of length $l$, with $1 \leq l \leq n-1$ is ${\rm pre}_l(w) = w[1 .. l]$  (resp. ${\rm suf}_l (w) = w[n-l+1 .. n]$).
  When ${\rm pre}_l(w) = {\rm suf}_l (w)=u$ then $u$ is here referred to as an \emph{overlap} of $w$ of length $l$; 
  it is also called border, or bifix. A word $w$ is said {\em $f$-free} if $w$ does not contain $f$ as a factor.

An {\em edit operation} is a function $O: \Sigma^* \to \Sigma^*$ that transform a word into another one. Among the most common edit operations there are the insertion, the  deletion or the replacement of a character and the swap of two adjacent characters. 
Let $OP$ be a {\em set of edit operations}. The {\em edit distance} of two words $u, v \in\Sigma^*$ 
is the minimum number of edit operations in $OP$ needed to transform $u$ into $v$. .  
In this paper, we consider the edit distance that uses only replacements and swaps. Note that these two operations do not change  the length of the word. We give a formal definition.
\begin{definition}
    Let $\Sigma$ be a finite alphabet and $w=a_1a_2\ldots a_n$ a word over $\Sigma$.
    \\The {\em replacement operation} (or {\em replacement}, for short) on $w$ at position $i$ with $x\in \Sigma$, $x\neq a_i$,
is defined by
   $$R_{i,x}(a_1a_2\ldots a_{i-1}a_ia_{i+1}\ldots a_n)=a_1a_2\ldots a_{i-1} x a_{i+1}\ldots a_n.$$
    \noindent The {\em swap operation} (or {\em swap}, for short) on $w$ at position $i$ consists in exchanging  characters at positions $i$ and $i+1$, provided that they are different, $a_i \neq a_{i+1}$, 
    $$S_i(a_1a_2\ldots a_ia_{i+1}\ldots a_n)=a_1a_2\ldots a_{i+1}a_i \ldots a_n.$$
     \noindent 
    \end{definition}
When the alphabet $\Sigma=\{0,1\}$ there is only a possible replacement at a given position $i$, so we  write $R_i(w)$ instead of $R_{i,x}(w)$. 

Given two equal-length words $u=a_1\cdots a_n$ and $v=b_1\cdots b_n$, they have a {\em mismatch error} (or {\em mismatch}) at position $i$ if $a_i\neq b_i$ and they have a {\em swap error} (or {\em swap}) at position $i$ if $a_ia_{i+1}=b_{i+1}b_i$, with $a_i \neq a_{i+1}$. We say that $u$ and $v$ have an {\em error} at position $i$ if they have either a mismatch or a swap error. 

Note  that one swap corresponds to two adjacent mismatches. 

A word $f$ is isometric if for any pair of $f$-free words $u$ and $v$, there exists a 
sequence of minimal length of replacement operations that transform $u$ into $v$ where all the intermediate words are also $f$-free. 
 In this paper we refer to this definition of isometric as {\em Ham-isometric}.
In \cite{WYZ19}, 
a word $w$ has a $2$-error overlap  if  there exists $l$   such that 
${\rm pre}_l (w)$ and ${\rm suf}_l (w)$ have two mismatch errors. Then, they prove  
the following characterization. 
\begin{proposition}\label{p-Hamiso}
   A word $f$ is Ham-isometric if and only if $f$ has a 2-error overlap.
\end{proposition}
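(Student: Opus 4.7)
The plan is to prove the two directions of the biconditional separately, in the spirit of the Wei--Yang--Zhu argument cited in the paper.

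For the $(\Rightarrow)$ direction, assume $f$ is Ham-isometric. I would choose a canonical pair of $f$-free words $u$ and $v$ of small Hamming distance whose connecting replacement transformations are forced to interact with would-be occurrences of $f$. Since $f$ is Ham-isometric, such a transformation must in fact exist and avoid $f$ throughout; analysing the schedule of replacements near a threatened occurrence of $f$ reveals that a prefix and a suffix of $f$ of some common length $l$ differ in exactly two positions. That is precisely the definition of a 2-error overlap.

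For the $(\Leftarrow)$ direction, assume $f$ admits a 2-error overlap at length $l$, with mismatch positions $i<j$ between ${\rm pre}_l(f)$ and ${\rm suf}_l(f)$. Given any pair of $f$-free words $u,v$, I would describe a step-by-step procedure that flips the positions where $u$ and $v$ disagree in an order chosen to avoid creating $f$. Whenever a naive single flip would produce an occurrence of $f$, the 2-error overlap structure of $f$ provides another eligible mismatch to flip first: the two mismatch positions $i,j$ of the overlap, shifted to align with the threatened occurrence, certify that a rerouted step keeps all intermediate words $f$-free while preserving the total number of replacements (which equals the Hamming distance of $u$ and $v$).

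The main obstacle is the backward direction. The case analysis has to handle each way in which a single replacement can create an occurrence of $f$ inside an otherwise $f$-free intermediate word, and for each case one must identify, via the shift by $l$, the alternative position to flip instead, and check that this earlier flip does not itself introduce an occurrence of $f$ elsewhere. The invariant to maintain is that every flip still corresponds to a mismatch between the current word and $v$, so that the transformation remains of minimum length. Formalizing this invariant and carrying it through the case analysis is where the bulk of the work lies; the forward direction, by contrast, is mainly a matter of designing witness pairs delicate enough to force the 2-error overlap (as opposed to a weaker single-mismatch overlap) to appear.
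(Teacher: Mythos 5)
The paper does not actually prove this proposition; it quotes it from \cite{WYZ19}, so there is no internal proof to compare against. More importantly, the statement you set out to prove is false as printed: a negation is missing. The correct characterization is that $f$ is Ham-\emph{non}-isometric if and only if $f$ has a 2-error overlap, and this is how the paper itself uses the proposition later: in the proof of Proposition~\ref{p-compareHam}, the word $111000$, which has the 2-error overlap $11$ versus $00$ at length $2$, is declared Ham-non-isometric ``by Proposition~\ref{p-Hamiso}'', while $1010$, which has no 2-error overlap, is declared Ham-isometric. Consequently both implications you propose are refuted by concrete words: $f=11$ is Ham-isometric (Fibonacci cubes are isometric subgraphs of hypercubes) but has no 2-error overlap, which kills your $(\Rightarrow)$ direction; and $f=111000$ has a 2-error overlap but is Ham-non-isometric, which kills your $(\Leftarrow)$ direction. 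No amount of care in the case analysis can rescue either argument as you have oriented them.

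What is salvageable is that your two techniques are the right ones, attached to the wrong conclusions. The witness-pair idea you sketch for $(\Rightarrow)$ is in fact the standard proof that a 2-error overlap forces \emph{non}-isometricity: from an overlap of shift $r$ with error positions $i<j$ one builds $\alpha_r={\rm pre}_r(f)R_i(f)$ and $\beta_r={\rm pre}_r(f)R_j(f)$, shows both are $f$-free (with a separate construction in the degenerate case where $\beta_r$ is not, cf.\ the \emph{Condition}$^{\sim}$ discussion in Section~\ref{s-witness}), and checks that either order of the two flips creates an occurrence of $f$. The rerouting argument you sketch for $(\Leftarrow)$ is the proof that the \emph{absence} of any 2-error overlap makes $f$ isometric: when a flip would create an occurrence of $f$, the $f$-freeness of $v$ supplies another mismatch inside that window, and it is precisely the absence of 2-error overlaps that guarantees the rerouting terminates rather than cycling. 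Even with the polarity corrected, your sketches leave all of the substantive work (explicit witnesses, the non-$f$-free $\beta_r$ case, termination of the rerouting) undone, so this would need to be fleshed out or, as the paper does, delegated to the cited reference.
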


\section{Tilde-distance and tilde-isometric words}\label{s-tildeiso}

In this section we consider the edit distance based on swap and replacement operations that we call tilde-distance and we denote $dist_\sim$. 
First, we give some  definitions and notations, together with  some examples and the proofs of  some preliminary properties. 

\begin{definition}\label{def:d-tilde-distance}
	Let $u, v \in \Sigma^*$ be words of equal length.
	The {\em tilde-distance} ${\rm dist}_\sim(u,v)$ between $u$ and $v$ is the minimum number of   replacements and swaps needed to transform $u$ into $v$.
	\end{definition}
\begin{definition}\label{def:tilde-transformation}
  	Let $u, v \in \Sigma^
   *$ be words of equal length.  
  	A \emph{tilde-transformation} $\tau$
  	of length $h$ from $u$ to $v$ is a sequence of words $(w_0, w_1, \ldots, w_h)$ such that
  	$w_0=u$, $w_h=v$, and for any $k=0, 1, \ldots ,h-1$,
  	${\rm dist}_\sim(w_k,w_{k+1})=1$.  Moreover, $\tau$ is {\em $f$-free} if for any $i = 0,1, \ldots ,h$, the word $w_i$ is $f$-free. 
  \end{definition}
    A tilde-transformation $(w_0, w_1, \ldots, w_h)$ from $u$ to $v$  {is associated to a} sequence of $h$ operations $(O_{i_1}, O_{i_2},\ldots O_{i_h})$  such that, for any $k=1, \ldots ,h$,  $O_{i_k} \in \{R_{i_k,x},S_{i_k}\}$ and $w_{k}=O_{i_k}(w_{k-1})$; 
 it can be represented as follows: 
 $$u=w_0\xrightarrow{O_{i_1}}w_1 \xrightarrow{O_{i_2}} \cdots \xrightarrow{O_{i_h}}w_h=v.$$
 With a little abuse of notation,  in the sequel we will refer to a tilde-transformation both  as a sequence of words and as a sequence of operations. 
We give some examples.
\begin{example}\label{ex:tilde-trans} 
Let  $u=1011, v=0110$. Below we show two different tilde-transformations from $u$ to $v$.  Note that the length of $\tau_1$ corresponds to ${\rm dist}_\sim(u,v)=2$.
$$\tau_1: 1011\xrightarrow{S_1}0111 \xrightarrow{R_4}0110\hspace{1cm}\tau_2:1011 \xrightarrow{R_1} 0011 \xrightarrow{R_2}0111 \xrightarrow{R_4}0110$$
Furthermore, consider the  following tilde-transformations of $u'=100$ into $v'=001$:
$$\tau'_1:100\xrightarrow{S_1}010 \xrightarrow{S_2}001 \hspace{1cm}\tau'_2:100 \xrightarrow{R_1} 000  \xrightarrow{R_2}001$$
 Note that both $\tau'_1$ and $\tau'_2$ have the same length equal to ${\rm dist}_\sim(u',v')=2$. Interestingly, in $\tau'_1$ the symbol in position 2 is changed twice.
\end{example}

The next lemma shows that, in the case of a two letters alphabet, we can restrict to  tilde-transformations where each character is changed at most once.

\begin{lemma}\label{l-no-sovrapp}
 Let $u,v\in \{0,1\}^m$  with $m\geq 1$. Then, there exists a tilde-transformation of $u$ into $v$ of length ${\rm dist}_\sim(u,v)$ such that for any $i=1, 2, \dots , m$, the character in position $i$ is changed at most once.
\end{lemma}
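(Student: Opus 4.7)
The plan is to take an arbitrary optimal tilde-transformation $\tau$ of $u$ into $v$ and, by local surgeries on $\tau$, reach another optimal transformation in which every position is changed at most once. For any tilde-transformation $\tau$ and position $i$ let $n_i(\tau)$ denote the number of operations of $\tau$ that modify position $i$ (these are exactly the operations $R_i$, $S_{i-1}$ and $S_i$); in the binary setting each of them flips the bit at that position. Set $\phi(\tau)=\sum_{i} \max(n_i(\tau)-1,0)$, so the goal becomes exhibiting an optimal $\tau$ with $\phi(\tau)=0$. I would choose $\tau$ to be optimal and, among all optimal tilde-transformations, minimizing $\phi$, and then assume for contradiction that $\phi(\tau)>0$.

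Then some position $i$ satisfies $n_i(\tau)\geq 2$. Let $O_{k_1}$ and $O_{k_2}$ be the first two operations of $\tau$ affecting position $i$, and call \emph{intermediate} the sub-sequence of operations strictly between them. By the choice of $k_1,k_2$, no intermediate operation touches position $i$; moreover, since $R_i$, $S_{i-1}$ and $S_i$ are also the only operations whose validity could depend on the current value of $w[i]$, the intermediate operations do not depend on $w[i]$ at all. A case analysis on $(O_{k_1},O_{k_2})$ then proceeds: (a) if both are $R_i$, they can simply be removed, shortening $\tau$ by two and contradicting optimality; (b) if one is $R_i$ and the other a swap $S_{i-1}$ or $S_i$, a direct computation shows that the block leaves $w[i]$ unchanged while flipping exactly one neighbour, so the block can be replaced by a single replacement at that neighbour placed at one end of the intermediate block, shortening $\tau$ by one and again contradicting optimality; (c) if both are swaps, the block can be simulated by one replacement at a neighbouring position inserted \emph{before} the intermediate block together with another one (at the same or at the opposite neighbour, depending on the pair of swaps involved) inserted \emph{after} it. The resulting transformation has the same length as $\tau$ but with $n_i$ strictly smaller and all other $n_k$ unchanged, contradicting the minimality of $\phi(\tau)$.

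The main obstacle is case~(c): I must check that after the surgery the modified sequence is still a valid tilde-transformation, that is, every remaining swap acts on a pair of distinct adjacent symbols. This reduces to verifying that the values of $w[i-1]$ and $w[i+1]$ throughout the intermediate block coincide in the old and in the new transformations, which is exactly what the inserted replacements enforce: the initial flip at a neighbour reproduces the change that the original swap $O_{k_1}$ would have produced there, and the final flip reproduces the one caused by $O_{k_2}$; intermediate operations acting on positions different from $i-1,i,i+1$ are unaffected because they never involved the altered bits in the first place. Once this validity check is settled in all three cases, every occurrence of $n_i(\tau)\geq 2$ yields a contradiction, forcing $\phi(\tau)=0$ and giving the desired optimal transformation.
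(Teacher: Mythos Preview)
Your proposal is correct and follows essentially the same approach as the paper's proof: take an optimal transformation, locate two operations affecting a common position, and perform a local surgery (replacing the pair by fewer or equally many single-position replacements) to reach a contradiction. Your version is in fact more carefully argued than the paper's: you introduce the potential $\phi(\tau)=\sum_i \max(n_i(\tau)-1,0)$ to handle the swap--swap case rigorously, you explicitly verify that intermediate operations remain valid after surgery, and you cover all sub-cases, whereas the paper treats only the mixed replacement/swap pairs and one swap--swap pair, and is silent on validity of intervening operations.
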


\begin{proof} Let  $u=a_1\cdots a_m$ and $v=b_1\cdots b_m$ and let $\tau$
be a tilde-transformation of $u$ into $v$ of length $d={\rm dist}_{\sim}(u,v)$. 
Suppose  that, for some $i$, the character in position $i$ is changed more than once by $\tau$ and let $O_t$ and $O_s$ be the first and the second operation, respectively, that modify the character in position $i$.
Observe that the character in position $i$ can be changed by the operations $R_i$, $S_{i-1}$ or $S_i$. 
\\
 Suppose  that $O_t=S_i$ and $O_s=R_i$. Then, the symbol $a_i$ is changed twice and two operations $S_i$ and $R_i$ could be replaced by a single $R_{i+1}$. This would yield a tilde-transformation of $u$ into $v$ of length strictly less than $d$; this is a contradiction to the definition of tilde-distance. 
 Similarly for the cases where $O_t=R_i$ and $O_s=S_i$, $O_t=S_{i-1}$ and $O_s=R_i$, $O_t=R_i$ and $O_s=S_{i-1}$. 
 \\
  Finally, if 
  $O_t=S_{i-1}$ and $O_s=S_i$
   then the three characters in positions $i-1$, $i$ and $i+1$ are changed, but the one in position $i$ is changed twice. Hence, the two swap operations $S_{i-1}$ and $S_i$ can be replaced by $R_{i-1}$ and $R_{i+1}$ yielding a tilde-transformation of $u$ into $v$ of same length $d$ which instead involves positions $i-1$ and $i$ just once (see $\tau'_2$ in  Example  \ref{ex:tilde-trans}). 
   \end{proof}

\begin{remark}
Lemma \ref{l-no-sovrapp} only applies to a binary alphabet. Indeed, if $\Sigma=\{0,1,2\}$, and take $u=012$ and $v=120$, then ${\rm dist}_\sim(012, 120)=2$ because there is the tilde-transformation  $012 \xrightarrow{S_1} 102 \xrightarrow{S_2} 120$. Instead, in order to change  each character at most once, three replacement operations are needed.
\end{remark}

\begin{definition}
 Let $\Sigma$ be a finite alphabet and $u,v\in \Sigma^+$. A tilde-transformation from $u$ to $v$ is {\em minimal} if its length is equal to ${\rm dist}_{\sim}(u,v)$ and characters in each position are modified at most once.   
\end{definition}

Lemma \ref{l-no-sovrapp} guarantees that, in the binary case, a minimal tilde-transformation always exists. In the sequel, this will be the most investigated case.
Let us now define isometric words based on the swap and mismatch distance.

\begin{definition}\label{d-tilde-iso}
	Let $f\in\Sigma^n$, with $n\geq 1$, $f$ is \emph{tilde-isometric} if for any pair of $f$-free words $u$ and $v$ of length $m>n$, there exists a minimal tilde-transformation from $u$ to $v$ that is $f$-free.
    It is \emph{tilde-non-isometric} if it is not tilde-isometric. 
\end{definition}

In order to prove that a word is tilde-non-isometric it is sufficient to exhibit a pair $(u,v)$ of words contradicting the Definition \ref{d-tilde-iso}.  Such a pair will be referred to as  tilde-witnesses for $f$. 
Some examples follow.

\begin{definition}\label{d-witnesses}
    A pair $(u,v)$ of words in $\Sigma^m$ is a pair of \emph {tilde-witnesses} for $f$ 
    if: 
    
    1. $u$ and $v$ are $f$-free
    
    2. ${\rm dist}_{\sim}(u,v) \geq 2$
    
      3. there exists no minimal tilde-transformation from $u$ to $v$ that is $f$-free.
\end{definition}

\begin{example}\label{e-1010}
The word $f=1010$ is tilde-non-isometric because $u=11000$ and $v=10110$ are tilde-witnesses for $f$. In fact, the only possible minimal tilde-transformations from $u$ to $v$ are
$11000 \xrightarrow{S_2} 10100 \xrightarrow{R_4} 10110$ and $11000 \xrightarrow{R_4} 11010 \xrightarrow{S_2} 10110$ and in both cases $1010$ appears as factor after the first step.
\end{example}

\begin{remark}\label{r-ham-replacements}
When a transformation contains a swap and a replacement that are adjacent, there could exist many distinct minimal tilde-transformations that involve different sets of operations. 
For instance, the pair $(u,v)$, with $u=010$ and $v=101$, has the following minimal tilde-transformations:
$$010 \xrightarrow{S_1} 100 \xrightarrow{R_{3}} 101 \hspace{1cm}010 \xrightarrow{S_2} 001 \xrightarrow{R_{1}} 101$$
 This fact cannot happen when only replacements are allowed. For this reason studying tilde-isometric words is more complicated than the Hamming case. 
\end{remark}

Example \ref{e-1010} shows a tilde-non-isometric word. Proving that a given word is tilde-isometric is much harder since it requires to give evidence that no tilde-witnesses exist. We will now prove that word $111000$ is isometric with {\em ad-hoc} technique.

\begin{example}\label{ex-111000-good}
The word $f=111000$ is tilde-isometric. Suppose by the contrary that  $f$ is tilde-non-isometric and let $(u,v)$ be a pair of tilde-witnesses for $f$ of minimal tilde-distance. 
If $u$ and $v$ have only mismatch errors, this is the case of the Hamming distance and results from this theory \cite{MMM22,Wei17} show that 
$u=1110\bm{01}1000$ and $v=1110\bm{10}1000$; these are not tilde-witnesses since ${\rm dist}_\sim(u,v)=1.$
\\
Therefore, $u$ and $v$ have a swap error in some position $i$; suppose $u[i..i+1]=01$.
The minimality of ${\rm dist}_\sim(u,v)$ implies that 
$S_i(u)$ is not $f$-free. Then, a factor $111000$
appears in $S_i(u)$ from position $i-2$, and $u[i-2 ..i+3]= 110100$. 
Since $v$ is $f$-free, then there is another error in $u$ involving some positions in $[i-2 ..i+3]$.
It cannot be neither a swap 
(since there are no adjacent different symbols that are not changed yet), nor a mismatch in positions $i-1$, $i+2$ (since the corresponding replacement cannot let $f$ occur). Then, it is a mismatch in position $i-2$ or $i+3$.
Consider the case of a mismatch in position $i+3$ (the other case is analogous). 
Then, $u[i+3 ..i+8]= 011000$ and there is another error in $[i+4 ..i+8]$, in fact, in position $i+6$ or $i+8$. Continuing with similar reasoning, one  falls back  to the previous situation.
This is  a contradiction because
the length of $u$ is finite.
\end{example}

From now on,  
we consider only the binary alphabet $\Sigma=\{0, 1\}$ and
we study isometric binary words beginning by $1$, in view of the following lemma whose  proof can be easily inferred by combining the definitions.


\begin{lemma}\label{l-rev-comp}
	Let $f\in\{0,1\}^n$. The following statements are equivalent:
 \begin{enumerate}
     \item $f$ is tilde-isometric
     \item $f^{rev}$ is tilde-isometric 
     \item $\overline{f}$  is tilde-isometric.
 \end{enumerate}
\end{lemma}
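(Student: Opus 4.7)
The plan is to prove the three equivalences by transporting tilde-witnesses through the bijections $w \mapsto w^{rev}$ and $w \mapsto \overline{w}$ on $\{0,1\}^*$. Rather than arguing directly about isometry, I work with its negation: if a pair of tilde-witnesses exists for $f$, I will produce a pair of tilde-witnesses for $f^{rev}$ and for $\overline{f}$. Since both maps are involutions, this yields the three equivalences at once.

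For the equivalence (1) $\Leftrightarrow$ (2), the central observation is that reversal preserves factor avoidance and tilde-distance. Precisely, $w$ contains $f$ as a factor if and only if $w^{rev}$ contains $f^{rev}$ as a factor, so $u$ is $f$-free iff $u^{rev}$ is $f^{rev}$-free. For the distance, I would verify that an operation $R_{i,x}$ acting on a word of length $m$ corresponds, after reversal, to $R_{m-i+1,x}$ on the reversed word, while a swap $S_i$ corresponds to $S_{m-i}$ (taking into account that swap acts on positions $i, i+1$). Applying this position renaming to every step of a minimal tilde-transformation from $u$ to $v$ yields a tilde-transformation from $u^{rev}$ to $v^{rev}$ of the same length, with reversed intermediate words. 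This shows ${\rm dist}_\sim(u,v) = {\rm dist}_\sim(u^{rev}, v^{rev})$, preserves the property that characters in each position are changed at most once, and, crucially, maps $f$-free transformations to $f^{rev}$-free transformations. Hence $(u,v)$ is a pair of tilde-witnesses for $f$ if and only if $(u^{rev}, v^{rev})$ is a pair of tilde-witnesses for $f^{rev}$.

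For the equivalence (1) $\Leftrightarrow$ (3), I proceed similarly using the complement. One has $w$ is $f$-free iff $\overline{w}$ is $\overline{f}$-free. Moreover, complementing commutes with both elementary operations: over the binary alphabet, $\overline{R_i(w)} = R_i(\overline{w})$ since the unique available replacement at position $i$ just flips the bit, and $\overline{S_i(w)} = S_i(\overline{w})$ since the condition $a_i \neq a_{i+1}$ is invariant under complement and the swapped characters are simply the complements of the original ones. Therefore applying complementation step-by-step to a minimal tilde-transformation from $u$ to $v$ yields one of the same length from $\overline{u}$ to $\overline{v}$, which is $\overline{f}$-free whenever the original was $f$-free; in particular ${\rm dist}_\sim(u,v) = {\rm dist}_\sim(\overline{u},\overline{v})$, and the correspondence of witnesses follows.

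No step is genuinely hard here. The only point requiring care is the bookkeeping in the reversal case: a swap at position $i$ flips two \emph{adjacent} positions, and one must make sure the induced position $m-i$ in $u^{rev}$ indeed targets the same pair of symbols, so that the sequence of reversed intermediate words is a valid tilde-transformation. Once this position renaming is checked, the three conditions are interchangeable and the lemma follows.
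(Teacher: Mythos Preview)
Your proposal is correct and follows exactly the approach the paper indicates: the paper does not spell out a proof but simply states that it ``can be easily inferred by combining the definitions,'' and your argument does precisely this by checking that both reversal and complement preserve $f$-freeness, the tilde-distance, and minimal tilde-transformations, hence transport tilde-witnesses bijectively. The only detail worth double-checking is the index bookkeeping you flagged yourself (that $S_i$ on a length-$m$ word corresponds to $S_{m-i}$ after reversal), which is routine.
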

  
Let us conclude the section by comparing tilde-isometric with Ham-isometric words.
Although the tilde-distance is more general than the Hamming distance, they
are incomparable, as stated in the following proposition.

\begin{proposition}\label{p-compareHam}
There exists a word which is tilde-isometric but Ham-non-isometric, and
a word which is tilde-non-isometric, but Ham-isometric.
\end{proposition}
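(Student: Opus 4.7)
The plan is to exhibit two explicit binary words, each witnessing one direction of the statement, and certify their properties by combining the examples already worked out earlier in the paper with the overlap criterion of Proposition~\ref{p-Hamiso}. This reduces the proof to short, finite case checks.

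For the direction ``tilde-isometric but Ham-non-isometric,'' I would take $f=111000$. Example~\ref{ex-111000-good} already establishes that $f$ is tilde-isometric, so nothing further is needed on the tilde side. For the Hamming side, I would apply Proposition~\ref{p-Hamiso} after exhibiting a $2$-error overlap of $f$: the prefix ${\rm pre}_2(f)=11$ and the suffix ${\rm suf}_2(f)=00$ differ in both positions, hence $f$ is Ham-non-isometric. (A second $2$-error overlap of length $4$ is available as well, but one suffices.)

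For the direction ``tilde-non-isometric but Ham-isometric,'' I would take $f=1010$. Example~\ref{e-1010} already exhibits the pair $(11000,10110)$ as tilde-witnesses, so $f$ is tilde-non-isometric. For the Hamming side I would enumerate the proper nontrivial overlaps of $f$: length~$1$ gives $(1,0)$ with one mismatch, length~$2$ gives $(10,10)$ with no mismatch, and length~$3$ gives $(101,010)$ with three mismatches. Since none of these yields exactly two mismatches, $f$ has no $2$-error overlap, and Proposition~\ref{p-Hamiso} lets us conclude that $f$ is Ham-isometric.

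I expect no real obstacle here: both sides rest on facts either already proved (Examples~\ref{ex-111000-good} and~\ref{e-1010}) or reduced to a trivial enumeration of the few overlaps of a short word. The only point to be slightly careful about is to read off the overlaps of $1010$ at every admissible length and check that the mismatch count is never exactly two, so that Proposition~\ref{p-Hamiso} can be invoked cleanly.
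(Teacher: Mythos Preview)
Your proposal is correct and follows exactly the same route as the paper: the same two words $111000$ and $1010$, the same appeals to Examples~\ref{ex-111000-good} and~\ref{e-1010} for the tilde side, and Proposition~\ref{p-Hamiso} for the Hamming side. You simply spell out the overlap checks that the paper leaves implicit.
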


\begin{proof}
The word $f=111000$ is tilde-isometric (see Example  \ref{ex-111000-good}), but $f$ is Ham-non-isometric by Proposition \ref{p-Hamiso}. 
	\\	
Conversely, $f'=1010$ is tilde-non-isometric (see Example \ref{e-1010}), but Ham-isometric by Proposition \ref{p-Hamiso}.

\end{proof}

\section{Tilde-isometric words and tilde-error overlaps}\label{s-iso-e-2eo}

 {In this section we focus on the word property of being tilde-non-isometric and connect it to the number of errors in its overlaps. The idea reminds the characterization for Ham-isometric words recalled in Proposition \ref{p-Hamiso} but the swap operation changes all the perspectives as pointed also in Proposition \ref{p-compareHam}. }

\begin{definition}
	Let $f\in \{0,1\}^n$.
	Then, $f$ has a {\em $q$-tilde-error overlap} of length $l$,  with
	$1 \leq l \leq n-1$ and $0\leq q \leq l$,
	if
	${\rm dist}_{\sim} ({\rm pre}_l(f), {\rm suf}_l(f))=q$.
\end{definition}

\begin{figure}[b]
\begin{center}
\hspace{-1.6cm}
    \begin{subfigure}{0.4\textwidth}
\includegraphics[width=180pt]{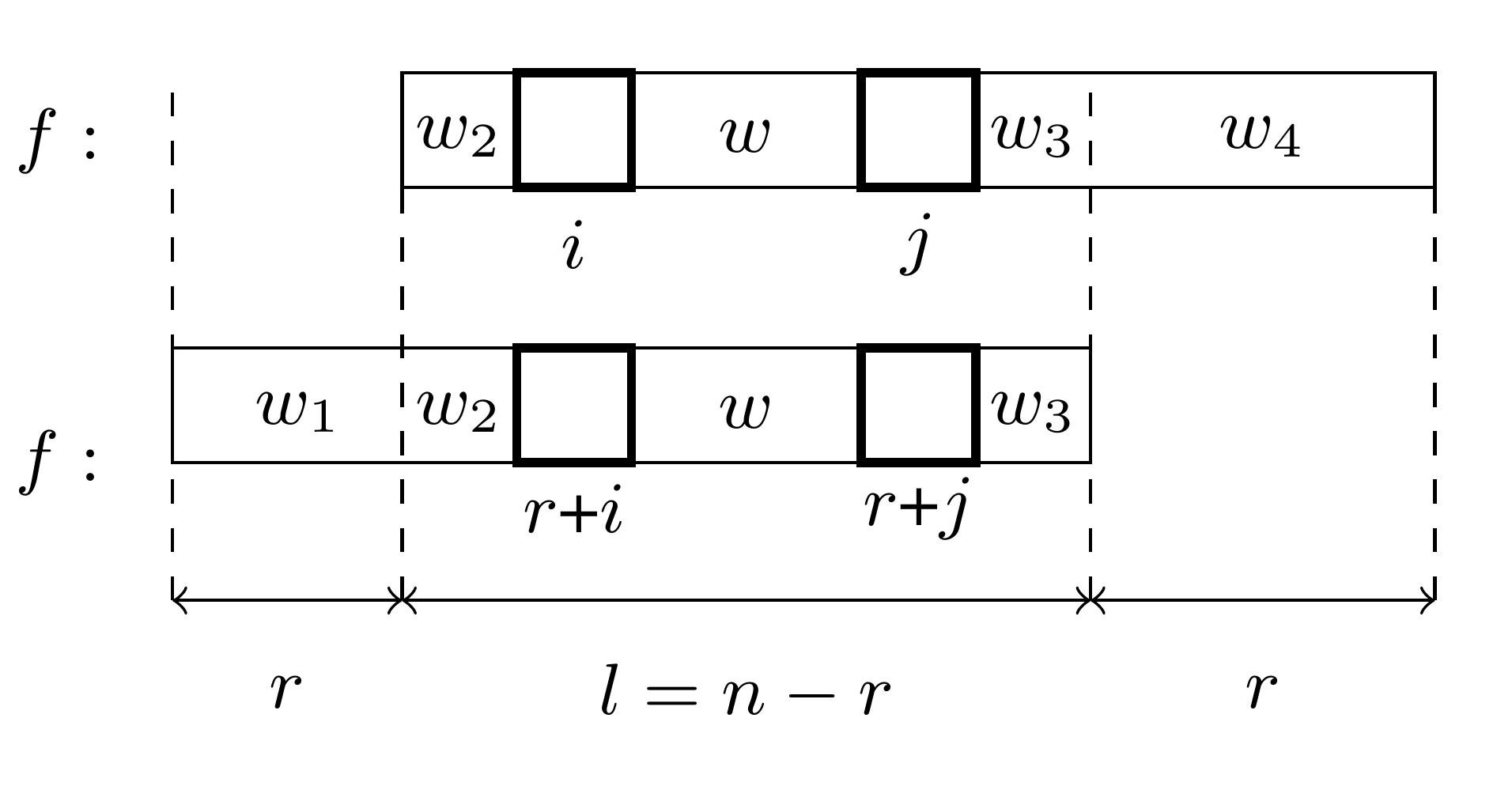}
\end{subfigure}
\hspace{1.4cm}
\begin{subfigure}{0.4\textwidth}
     \includegraphics[width=185pt]{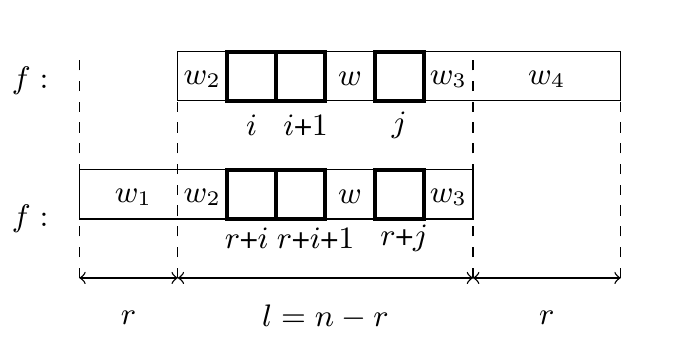} 
\end{subfigure}
    \caption{A word $f$ and its $2$-tilde-error overlap of type RR (a)  and SR (b)}\label{f-overlap}
\end{center}
\end{figure}

In other words, if $f$ has a $q$-tilde-error overlap of length $l$ then there exists a minimal tilde-transformation $\tau$ from ${\rm pre}_l(f)$ to ${\rm suf}_l(f)$ of length $q$. In the sequel, when $q=2$, in order to specify the kind of errors, a $2$-tilde-error overlap is referred to be of type RR if $\tau$
consists of two replacements, of type SS in case of two swaps, of type RS in case of replacement and swap, and 
of type SR in case of swap and replacement. If the two errors are in positions $i$ and $j$, with $i<j$ and we say that $f$ has a $2$-tilde-error overlap in $i$ and $j$ or, equivalently, that $i$ and $j$ are the {\em error positions} of the $2$-tilde-error overlap.  

Let $f\in \{0, 1 \}^n$ have a $2$-tilde-error overlap in positions $i$ and $j$ with $i<j$, of shift $r$ and length $l=n-r$. The following situations can occur (see Fig.\ref{f-overlap}), for some $w_1, w_2, w_3, w_4 \in \{0,1\}^*$ and $\vert w_1 \vert=r$.

\begin{description}
    \item RR: $f = w_2 \bm{f[i]} w \bm{f[j]} w_3w_4 = w_1 w_2 \bm{\overline{f[i]}} w \bm{\overline{f[j]}} w_3$
    \item SR: $f = w_2 \bm{f[i]f[i+1]} w \bm{f[j]} w_3w_4 = w_1 w_2 \bm{f[i+1]f[i]} w \bm{\overline{f[j]}} w_3$
    \item RS: $f = w_2 \bm{f[i]} w \bm{f[j]f[j+1]} w_3w_4 = w_1 w_2 \bm{\overline{f[i]}} w \bm{f[j+1]f[j]} w_3$
    \item SS: $f = w_2 \bm{f[i]f[i+1]} w \bm{f[j]f[j+1]} w_3w_4 = w_1 w_2 \bm{f[i+1]f[i]} w \bm{f[j+1]f[j]} w_3$
\end{description}

\noindent If $w=\epsilon$ we say that the two errors are {\em adjacent}. In particular, in the case of a $2$-tilde-error overlap in positions $i$ and $j$, of type RR and RS, the two errors are adjacent if $j=i+1$. Note that in case of adjacent errors of type RR with $f[i]\neq f[i+1]$, we have a $1$-tilde-error overlap that is a swap and that we call of type S. For 2-tilde error overlap of type SR and SS in positions $i$ and $j$, the two errors are adjacent if $j=i+2$. 

\begin{remark}\label{r-swap-a-cavallo}
Let $f\in\{0,1\}^n$ be a tilde-non-isometric word and $(u, v)$, with $u, v\in \Sigma^m$, be a pair of tilde-witnesses for $f$, with minimal $d={\rm dist}_\sim(u, v)$  among all pairs of tilde-witnesses of length $m$.

Let $\{O_{i_1}, O_{i_2}, \dots , O_{i_d}\}$ be the set of operations of a minimal tilde-transformation from $u$ to $v$,	
$1\leq i_1< i_2< \dots  < i_d\leq m$.
	Then, for any $j=1, 2, \dots, d-1$, $O_{i_j}(u)$ has an occurrence of $f$ in the interval $[k_j .. (k_j+n-1)]$, which contains at least one position modified by $O_{i_j}$.
In fact, if $O_{i_j}$ is a swap operation then it changes two positions at once, positions $i_j$ and $i_j+1$,  and the interval $[k_j .. (k_j+n-1)]$ may contain both positions or just one. Note that when only one position is contained in the interval, such position is at the boundary of the interval.  This means that,
although an error in a position at the boundary of a given interval may appear as caused by a replacement, this can be actually caused by a hidden swap involving positions over the boundary.
\end{remark}

\begin{proposition} \label{p_2_error_overlap}
    If $f \in\{0,1\}^n$ is tilde-non-isometric then 
     \begin{enumerate}
 \item either $f$ has a 1-tilde-error overlap of type S 
 \item or $f$ has a $2$-tilde-error overlap.
 \end{enumerate}
\end{proposition}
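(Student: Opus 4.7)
The plan is to take a tilde-witness pair $(u,v)$ minimizing $d={\rm dist}_{\sim}(u,v)$ among all tilde-witness pairs for $f$, and to extract a prefix--suffix overlap of $f$ from the interaction of two operations inside a minimal tilde-transformation from $u$ to $v$. By Lemma \ref{l-no-sovrapp} I may assume this transformation modifies each position at most once, so that operations acting on disjoint positions commute; together with the minimality of $(u,v)$, this produces the property recorded in Remark \ref{r-swap-a-cavallo}: every single operation $O$ of the transformation, applied directly to $u$, creates a new occurrence of $f$ in an interval $I_O$ that contains a position modified by $O$.

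First I pick two operations whose created $f$-intervals overlap. Start with any operation $O_{i_1}$ of the transformation, producing an $f$-factor at $[k_1..k_1+n-1]$ in $O_{i_1}(u)$. Since $u$ and $v$ are both $f$-free and each position is modified at most once, some later operation $O_{i_s}$ must touch $[k_1..k_1+n-1]$ in order to kill this occurrence. Applying the remark to $O_{i_s}$, the word $O_{i_s}(u)$ contains $f$ in an interval $[k_s..k_s+n-1]$, and a modified position of $O_{i_s}$ lies in $[k_1..k_1+n-1]\cap[k_s..k_s+n-1]$, so the two intervals overlap. A short case check, comparing the possible single-operation fixes of $u[k_1..k_1+n-1]$ to $f$, rules out $k_1=k_s$; hence, up to symmetry, $k_1<k_s$ and the overlap length is $l=k_1+n-k_s\in\{1,\ldots,n-1\}$.

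Set $w=u[k_s..k_1+n-1]$, which equals both ${\rm suf}_l(F_1)$ and ${\rm pre}_l(F_s)$, where $F_j=u[k_j..k_j+n-1]$. Because $O_{i_s}$ modifies a position inside the overlap, ${\rm pre}_l(f)$ differs from $w$ by exactly one tilde-error coming from $O_{i_s}$; similarly ${\rm suf}_l(f)$ differs from $w$ by at most one tilde-error coming from $O_{i_1}$. The triangle inequality then gives ${\rm dist}_{\sim}({\rm pre}_l(f),{\rm suf}_l(f))\leq 2$, and this distance is strictly positive since ${\rm pre}_l(f)\neq w$.

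To conclude, I carry out a short case analysis on the types of $O_{i_1}$ and $O_{i_s}$ (replacement versus swap), together with the boundary behaviour of swaps whose positions sit at the edge of the overlap. When the two individual errors lie at non-adjacent positions they yield a 2-tilde-error overlap directly; when they lie at two adjacent positions forming a binary swap pattern, they collapse into a single swap and yield a 1-tilde-error overlap of type S. The main obstacle is to exclude a residual 1-tilde-error overlap of type R arising in the degenerate subcase in which the $O_{i_1}$-contribution to ${\rm suf}_l(f)$ vanishes; this is handled by exploiting the minimality of $(u,v)$ to reduce to the case $d=2$ (for $d\geq 3$, using the first and last intermediate words containing $f$ to build a tilde-witness pair of distance~$2$), because when $d=2$ a symmetric destruction argument forces both $O_{i_1}$ and $O_{i_s}$ to lie inside the overlap, ruling out the degenerate type-R case.
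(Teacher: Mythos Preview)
Your overall plan coincides with the paper's: take a tilde-witness pair $(u,v)$ of minimal distance, use the minimality to force that each single operation $O$ applied to $u$ creates an occurrence of $f$ in some interval $I_O$, and then extract a prefix--suffix overlap of $f$ from two such intervals. The difference lies in how the two operations are selected.

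The paper does \emph{not} start from one operation $O_{i_1}$ and then look for a second operation touching $I_{i_1}$. Instead it argues, via a pigeonhole/cycle argument on the map ``$j \mapsto$ some other operation whose positions lie in $I_j$'', that there exist indices $s,t$ with \emph{mutual} containment: a position modified by $O_t$ lies in $I_s$ \emph{and} a position modified by $O_s$ lies in $I_t$. This symmetry guarantees that the intersection $I_s\cap I_t$ already carries (at least) one error contributed by each of $O_s$ and $O_t$, so that ${\rm pre}_l(f)$ and ${\rm suf}_l(f)$ genuinely differ in two places. The subsequent case analysis on whether two, three or four modified positions fall in $I_s\cap I_t$ then yields either a $2$-tilde-error overlap or, when the two errors collapse into one swap, a $1$-tilde-error overlap of type~S. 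No residual ``type~R'' case ever appears.

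In your approach the selection is asymmetric: you only know that a position of $O_{i_s}$ lies in $I_1$, not that a position of $O_{i_1}$ lies in $I_s$. This is precisely what produces the degenerate $1$-tilde-error overlap of type~R that you then try to eliminate. The elimination you sketch, however, is not sound as written. You claim that for $d\geq 3$ one can ``use the first and last intermediate words containing $f$ to build a tilde-witness pair of distance~$2$'', but those intermediate words $w_1,\ldots,w_{d-1}$ all contain $f$ (by the minimality of $(u,v)$) and therefore cannot serve as members of a witness pair; nor is it clear how to manufacture from them two $f$-free words at tilde-distance~$2$ whose minimal transformations are all blocked. In the tilde setting this reduction is especially delicate because, as Remark~\ref{r-ham-replacements} shows, different minimal tilde-transformations between the same endpoints may use different sets of operations, so one cannot simply ``freeze'' $d-2$ operations and keep a distance-$2$ subproblem. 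Your observation that for $d=2$ the symmetric destruction argument gives mutual containment is correct, but the bridge from $d\geq 3$ to $d=2$ is the missing step.

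The cleanest fix is to adopt the paper's device: rather than reducing to $d=2$, argue directly that among the $d$ operations there must exist two whose $f$-intervals contain each other's modified positions (each interval contains its own operation's positions and at least one other operation's positions, so the resulting functional digraph on $\{1,\dots,d\}$ has a cycle, and a short extremal argument on the positions along that cycle yields a $2$-cycle). Once you have this mutual containment, your case analysis on replacements versus swaps and boundary behaviour goes through exactly as you outlined, and the type-R degeneracy never arises.
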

\begin{proof}
Let $f$ be a tilde-non-isometric word, $(u, v)$ be a pair of tilde-witnesses for $f$, and $\{O_{i_1}, O_{i_2}, \dots , O_{i_d}\}$ as in Remark \ref{r-swap-a-cavallo}.
Then, for any $j=1, 2, \dots, d-1$, $O_{i_j}(u)$ has an occurrence of $f$ in the interval $[k_j .. k_j+n-1]$, which contains at least one position modified by $O_{i_j}$. Note that,
 this occurrence of $f$ must disappear in a tilde-transformation from $u$ to $v$, because $v$ is $f$-free. Hence, the interval $[k_j .. k_j+n-1]$ contains a position modified by another operation in $\{O_{i_1}, O_{i_2}, \dots , O{i_d}\}$.
By the pigeonhole principle, there exist $s,t \in \{i_1, i_2, \dots i_d\}$, such that $O_s(u)$ has an occurrence of $f$ in $[k_s .. k_s+n-1]$ that contains at least one position modified by $O_{t}$ and $O_t(u)$ has an occurrence of $f$ in $[k_t .. k_t+n-1]$ that contains at least one position modified by $O_{s}$. Without loss of generality,  suppose that $k_s <k_t$. The intersection of $[k_s .. k_s+n-1]$ and $[k_t .. k_t+n-1]$ intercepts a prefix of $f$ in $O_t(u)$ and a suffix of $f$ in $O_s(u)$ of some length $l$. 
 Such an intersection can contain either two, or three, or four among the positions modified by $O_{s}$ and $O_{t}$, of which at least one is modified by $O_{s}$ and at least one by $O_{t}$.

 Consider the case that the intersection of $[k_s .. k_s+n-1]$ and $[k_t .. k_t+n-1]$  
contains two among the positions modified by $O_s$ and $O_t$, and denote them $i$ and $j$, with $1\leq i< j\leq l$.
If the positions are not adjacent, 
then $f$ has a $2$-tilde-error overlap (of type RR). 
Otherwise, if 
 $f[i]\neq f[i+1]$  then $f$ has a 1-tilde-error overlap of type S. If $f[i]=f[i+1]$  then $f$ has a  $2$-tilde-error overlap (of type RR).

Suppose that the intersection of $[k_s .. k_s+n-1]$ and $[k_t .. k_t+n-1]$  contains three among the positions modified by $O_{s}$ and $O_{t}$. In this case, at least one of the two operations must be a swap; suppose $O_s$ is a swap.
Then, $O_t$ could be either a replacement on the 
third position, or a swap if the third position is at the boundary of $[k_t .. k_t+n-1]$. In any case, $f$ has a $2$-tilde-error overlap (of type SR or SS).

Suppose now that the intersection of $[k_s .. k_s+n-1]$ and $[k_t .. k_t+n-1]$  contains four among the positions  modified by $O_{s}$ and $O_{t}$. In this case, each of $O_s$ and $O_t$ involves two positions, 
and $f$ has a $2$-tilde-error overlap of type SS. 

\end{proof}

\section{Construction of tilde-witnesses}\label{s-witness}
As already discussed in Section \ref{s-iso-e-2eo}, in order to prove that a word is tilde-non-isometric it is sufficient to exhibit a pair of tilde-witnesses.
Proposition \ref{p_2_error_overlap} states that if a word is tilde-non-isometric then it has either a $1$-tilde-error overlap of type swap or a $2$-tilde-error overlap. In this section we show the construction of tilde-witnesses for a word, starting from its error overlaps.
Let us start with the case of a $1$-tilde-error overlap.

\begin{proposition}
    If $f$ has a $1$-tilde-error overlap of type S, then it is tilde-non-isometric.
\end{proposition}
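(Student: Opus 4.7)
The plan is to exhibit an explicit pair of tilde-witnesses produced from the error overlap. Let $l$ and $r=n-l$ be the length and shift of the given $1$-tilde-error overlap of type S, let $i$ be the position where the swap occurs, and set $a=f[i]$, $b=f[i+1]$. The hypothesis gives $a\neq b$, $f[r+i]=b$, $f[r+i+1]=a$, and $f[r+k]=f[k]$ for every $k\in\{1,\ldots,l\}\setminus\{i,i+1\}$. I would consider the word $Z=f\cdot f[l+1..n]\in\{0,1\}^{n+r}$, which contains $f$ as its prefix, and define
\[
\tilde\alpha=R_{r+i,a}(Z),\qquad \tilde\beta=R_{r+i+1,b}(Z).
\]
Thus $\tilde\alpha$ and $\tilde\beta$ agree everywhere except at the adjacent positions $r+i$ and $r+i+1$, where $\tilde\alpha$ reads $aa$ and $\tilde\beta$ reads $bb$.

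First I would observe that ${\rm dist}_\sim(\tilde\alpha,\tilde\beta)=2$: the two words differ at exactly two positions, so a single replacement does not suffice, and since $aa$ and $bb$ are pairs of equal symbols they are not swap-related, so a single swap is impossible too. Hence the distance is exactly $2$, realized by the two replacements $R_{r+i}$ and $R_{r+i+1}$. By Lemma \ref{l-no-sovrapp} every minimal tilde-transformation applies these two operations in one of the two orders. If $R_{r+i}$ is applied first, the intermediate word is precisely $Z$, which contains $f$ at position $1$. If $R_{r+i+1}$ is applied first, the intermediate word coincides with $Z$ except that positions $r+i,r+i+1$ now read $ab=f[i]f[i+1]$; using the overlap equalities $f[r+k]=f[k]$ for the other $k\in\{1,\ldots,l\}$ together with $Z[n+1..n+r]=f[l+1..n]$, the length-$n$ factor starting at position $r+1$ equals $f[1..l]\cdot f[l+1..n]=f$, so this intermediate also contains $f$, at position $r+1$. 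So no minimal tilde-transformation from $\tilde\alpha$ to $\tilde\beta$ is $f$-free.

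The main obstacle is to verify that $\tilde\alpha$ and $\tilde\beta$ are themselves $f$-free. At the two critical positions this already follows from the computation above: $\tilde\alpha[1..n]$ differs from $f$ at position $r+i$, and $\tilde\alpha[r+1..r+n]$ differs from $f$ at position $i+1$ (and symmetrically for $\tilde\beta$). To rule out occurrences at intermediate positions $k\in\{2,\ldots,r\}$ I would assume that the $1$-tilde-error overlap has been chosen of maximal length $l$: any additional occurrence of $f$ in $\tilde\alpha$ should then produce a longer overlap still carrying at most one tilde-error of type S, contradicting maximality. Making this last step rigorous, and handling the borderline case $r=1$ in which no intermediate position exists, is where the real combinatorial work lies. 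Once this is done, $(\tilde\alpha,\tilde\beta)$ is a pair of tilde-witnesses and $f$ is tilde-non-isometric.
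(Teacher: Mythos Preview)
Your witnesses are exactly the paper's pair $u={\rm pre}_r(f)R_i(f)$ and $v={\rm pre}_r(f)R_{i+1}(f)$ (your $\tilde\beta$ and $\tilde\alpha$, respectively), described via the equivalent base word $Z=f\cdot{\rm suf}_r(f)$; the paper likewise leaves the verification that they satisfy Definition~\ref{d-witnesses} to the reader, so the approaches coincide. One caution on the part you flag as the real work: your maximality idea for $f$-freeness does not go through as stated, since an occurrence of $f$ at an intermediate position of $\tilde\alpha$ would yield a longer overlap with at most one \emph{replacement} error, not one of type~S, so maximality among type-S overlaps gives no contradiction; the intended argument is a direct periodicity computation in the style of Lemma~\ref{l-alfa-tilde-f-free}.
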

\begin{proof}
Let $f$ have a $1$-tilde-error overlap in position $i$ of type S with shift $r$. The pair $(u,v)$ with:
    $$u={\rm pre}_r(f)R_i(f) \hspace{1cm}v={\rm pre}_r(f)R_{i+1}(f)$$
is a pair of tilde-witnesses for $f$. In fact, one can prove that they satisfy the conditions in Definition\ref{d-witnesses}.

\end{proof}

\begin{example}
    The word $f=101$ has a $1$-tilde-error overlap of type S in position $1$ therefore it is tilde-non-isometric. In fact, following the proof of previous proposition,  the pair $(u,v)$ with $u=1001$ and $v=1111$ is a pair of tilde-witnesses.
\end{example}

Let us now introduce some special words which often will serve as tilde-witnesses.
Let $f\in \{0, 1 \}^n$ have a $2$-tilde-error overlap of shift $r$ in positions $i$ and $j$, then
 \begin{equation} \label{eq:alfa_beta_tilde}
     \tilde\alpha_r= {\rm pre}_r(f) O_i(f) \,\, \mbox{and} \,\, \tilde\beta_r= {\rm pre}_r(f) O_j(f)
  \end{equation}

As an example, using the previous notations for errors of type SR we have that
\begin{equation} \label{eq:alfa_beta_tilde_SR} 
\tilde\alpha_r(f)=w_1 w_2 f[i+1]  f[i] w f[j] w_3w_4 \,\, \mbox{and} \,\,
\tilde\beta_r(f)=w_1 w_2f[i] f[i+1] w \overline{f[j]} w_3w_4 
\end{equation}

\begin{lemma}\label{l-alfa-tilde-f-free}
    Let $f\!\in\! \{0, 1 \}^n$ have a $2$-tilde-error overlap of shift $r$, then $\tilde\alpha_r(f)$ is $f$-free.
\end{lemma}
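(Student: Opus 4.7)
The plan is to verify directly that $f$ does not occur as a factor of $\tilde\alpha_r(f) = \text{pre}_r(f) \, O_i(f)$, a word of length $n+r$. Since $|f|=n$, any occurrence of $f$ must start at some position $p \in \{1, 2, \ldots, r+1\}$, so the argument reduces to a case analysis on $p$.

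The boundary positions are the easy ones. When $p = r+1$, the candidate factor is exactly $O_i(f)$, which differs from $f$ by construction: a replacement flips a bit, and a swap requires distinct adjacent symbols and so actually changes the word. When $p = 1$, the candidate factor is $\text{pre}_r(f) \cdot O_i(f)[1..l]$ with $l = n-r$; equality with $f$ would force $O_i(f)[1..l] = \text{suf}_l(f)$, equivalently $O_i(\text{pre}_l(f)) = \text{suf}_l(f)$, since the support of $O_i$ sits inside $[1..l]$. But the $2$-tilde-error overlap hypothesis says that a minimal tilde-transformation from $\text{pre}_l(f)$ to $\text{suf}_l(f)$ must apply both $O_i$ and $O_j$; applying only $O_i$ leaves the error $O_j$ unresolved at a position disjoint from the support of $O_i$ (since $i<j$, with only a boundary case to check when the two errors are adjacent). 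I would make this explicit by running through each of the four overlap types RR, SR, RS, SS and exhibiting the surviving mismatch.

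The intermediate range $1 < p \le r$ is the main obstacle. Writing $s = r - p + 1 \in [1..r-1]$, an occurrence of $f$ at position $p$ is equivalent to two simultaneous conditions: $(a)$ $f[p..r] = f[1..s]$, meaning $\text{pre}_r(f)$ has a border of length $s$; and $(b)$ $O_i(f)[1..n-s] = \text{suf}_{n-s}(f)$, which, after subtracting the localised effect of $O_i$, states that $f$ admits a $1$-tilde-error overlap of shift $s$ whose single error coincides exactly with $O_i$ in position and type. My plan is to combine $(a)$, $(b)$ and the hypothesised shift-$r$ overlap at $i$ and $j$ to produce a contradiction at position $j$: the period-$s$ equalities $f[k] = f[k+s]$ provided by $(b)$ propagate the value of $f[j]$ along a chain of $s$-steps, which, glued across the boundary between $\text{pre}_r(f)$ and its right extension via $(a)$, eventually reaches position $j+r$, forcing $f[j] = f[j+r]$ and contradicting the fact that $j$ is an error position of the shift-$r$ overlap.

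The delicate work I foresee lies in ensuring that the propagation chain avoids the exceptional supports of $O_i$ and $O_j$ and stays within $[1..n]$; care is needed when either $O_i$ or $O_j$ is a swap, since then the support spans two adjacent positions and the chain must be re-routed around them, and when the two shift-$r$ errors are adjacent ($j=i+1$ for RR/RS or $j=i+2$ for SR/SS), where some period relations degenerate and must be replaced by a direct check. For this reason the final write-up will be organised as a four-way case analysis on the type of the shift-$r$ overlap, with the swap-heavy types being the most technical.
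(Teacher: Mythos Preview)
Your overall decomposition into the boundary positions $p=1$, $p=r+1$ and the intermediate range $1<p\le r$ is correct and matches the paper, and your handling of the two boundary cases is fine.

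The divergence is in the intermediate range. The paper's argument there is a short two-hop, not a chain. The fact you are underusing is that $\tilde\alpha_r[1..n]$ is itself equal to $f$ except on the positions $r+\mathrm{supp}(O_j)$: this is precisely the statement that applying $O_i$ to the length-$l$ prefix corrects the $i$-error of the shift-$r$ overlap and leaves only the $j$-error. Thus $\tilde\alpha_r$ carries \emph{three} near-copies of $f$: the prefix view at position $1$ (defective on $r+\mathrm{supp}(O_j)$), the suffix view at position $r+1$ (defective on $r+\mathrm{supp}(O_i)$), and the hypothetical occurrence at position $r_1+1=p$. For, say, type SR the paper reads off
\[
f[i]=\tilde\alpha_r[r_1+i]=f[r_1+i]
\]
(occurrence, then prefix view; valid since $r_1+i<r+j$), and
\[
f[r_1+i]=\tilde\alpha_r[r+r_1+i]=f[r+i]
\]
(suffix view, then occurrence; valid since $r_1>1$ keeps $r_1+i\notin\{i,i+1\}$). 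Hence $f[i]=f[r+i]$, contradicting the error at $i$. The residual case $r_1=1$ is dispatched in one line by observing it forces $f[i]=f[i+1]$, impossible for a swap. The other overlap types are analogous, and RR is already in the literature.

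Your proposed chain of $s$-steps, by contrast, is not clearly well-defined: condition~(b) gives period $s$ only away from $\mathrm{supp}(O_i)$, and condition~(a) gives a period-$(r-s)$ relation only on $f[1..r]$; to pass from $j$ to $j+r$ you would need a combination of these moves summing to $r$, staying inside $[1..n]$, and avoiding $\mathrm{supp}(O_i)$, and you have not argued that such a path exists. Aiming the contradiction at $j$ rather than $i$ also works against you, since it is $O_i$, not $O_j$, whose effect is localised in $\tilde\alpha_r$. I would recast the intermediate case around the prefix-view observation and the two-hop; the four-way type split then becomes routine rather than ``the most technical'' part.
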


\begin{proof}
Suppose that  $f\in \{0, 1 \}^n$ has a $2$-tilde-error overlap. If it is of type RR then $\tilde\alpha_r(f)$ is $f$-free by Claim 1 of Lemma 2.2 in \cite{Wei17}, also in the case of adjacent errors. If it is of type SR, of shift $r$ in positions $i$ and $j$, with $i<j$, then, by Equation (\ref{eq:alfa_beta_tilde}), we have $\tilde\alpha_r(f)=w_1S_i(f)$ then $\tilde\alpha_r[r+k]=f[k]$, for any $1 \leq k \leq n$, with $k \neq i$ and $k \neq i+1$. If $f$ occurs in $\tilde\alpha_r$ in position $r_1+1$ we have that $1 < r_1 < r$ (if $r_1=1$ then $f[i]=f[i+1]$ and there is no swap error at position $i$) and $\tilde\alpha_r[r_1+1 \ldots r_1+n]=f[1 \ldots n]$. Finally, by Equation (\ref{eq:alfa_beta_tilde_SR}), we have that $\tilde\alpha_r[k]=f[k]$, for $k \neq r+j$. In conclusion, we have that $f[i]=\tilde\alpha_r[r_1+i]=f[r_1+i]$ (trivially, $r_1+i \neq r+j$). Furthermore $f[r_1+i]=\tilde\alpha_r[r+r_1+i]$ ($r_1+i \neq i$ and $r_1+i \neq i+1$ because $r_1 >1$). But $\tilde\alpha_r[r+r_1+i]=f[r+i] $ then we have the contradiction that $f[i]=f[r+i]$. If the $2$-tilde-error is of type RS, SS the proof is similar. For clarity, note that, also in the case of adjacent errors, supposing that $f$ occurs in $\tilde\alpha_r$ leads to a contradiction in $f[i]$ that is not influenced by $j$.

\end{proof}

Note that while $\tilde\alpha_r$ is always $f$-free, $\tilde\beta_r$ is not. Indeed, the property  $\tilde\beta_r$  not $f$-free is related to a condition on the overlap of $f$. We  
 give the following definition. 

\begin{definition}
    Let $f\in \{0, 1 \}^n$ and consider a $2$-tilde-error overlap of $f$, with shift $r$ and  error positions $i, j$, with $1\leq i< j\leq n-r$. 
    The $2$-tilde-error overlap satisfies $Condition^{\sim}$ if it is of type RR or SS and:
    
    \begin{center}
	\hspace{1 cm}
	$\left\{
	\begin{array}{ll}
 
			r \ \ is\ even &\\
			j-i = r/2 & \\
				f[i .. (i+r/2-1)] =f[j .. (j+r/2 -1)] & \\
		\end{array}
		\right. $
    \hspace{1 cm}	(\textbf{\textit{ Condition$^{\sim}$} })
    \end{center}
\end{definition}

\begin{lemma}\label{l-beta-tilde-f-free}
	Let $f\in \{0, 1 \}^n$ have a $2$-tilde-error overlap of shift $r$, then $\tilde\beta_r(f)$ is not $f$-free iff the  $2$-tilde-error overlap
	satisfies $Condition^{\sim}$.
\end{lemma}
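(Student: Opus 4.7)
The plan is to reduce any occurrence of $f$ in $\tilde\beta_r(f) = {\rm pre}_r(f)\, O_j(f)$ to a pair of simultaneous equality conditions on $f$, and then match these with Condition$^\sim$. Since $\tilde\beta_r(f)$ has length $n+r$, an occurrence of $f$ must start at some position $p \in [1, r+1]$. The boundary cases are ruled out: at $p = 1$, the required equality $(O_j f)[1..n-r] = {\rm suf}_{n-r}(f)$ combined with the overlap identity ${\rm suf}_{n-r}(f) = O_i O_j({\rm pre}_{n-r}(f))$ would force $O_i$ to act as the identity, which is impossible; at $p = r+1$, we would need $O_j(f) = f$, also impossible. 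Hence $p \in [2, r]$, and setting $s = r - p + 1 \in [1, r-1]$, the occurrence is equivalent to
\[
f[k + r - s] = f[k] \text{ for } k \in [1, s], \qquad (O_j f)[m] = f[m+s] \text{ for } m \in [1, n-s].
\]

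For the direction $(\Leftarrow)$, I assume Condition$^\sim$ holds, take $s = r/2$, and verify both equalities at $p = s+1$. The first follows by chaining the shift-$r$ overlap ($f[k] = f[k+r] = f[k+2s]$ at positions $k \notin \{i, j\}$ with $k \leq n-r$) with the block equality ($f[m] = f[m+s]$ for $m \in [i, i+s-1]$): one reaches $f[k+s]$ from $f[k]$ by an intermediate detour through $f[k+r]$, iterating the shift when $i$ is far from the left end of $f$. The second equality is verified by a case split on the position $m$ relative to $\{i, j\}$ and on the type of $O_j$ (flip at $j$ for RR, swap at $j, j+1$ for SS), using again the shift-$r$ overlap structure together with the block equality.

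For the direction $(\Rightarrow)$, the second equality realizes $f$ as having an ``almost shift-$s$ overlap'' with a localized error of the same kind as $O_j$ --- a mismatch at $j$ for types RR and SR, or a swap at $j, j+1$ for types RS and SS. Iterating this shift-$s$ structure yields a shift-$2s$ overlap of $f$ whose induced error pattern must coincide with the original shift-$r$ overlap at positions $i$ and $j$. Error-position bookkeeping then forces $r = 2s$ (so $r$ is even), $j - i = s = r/2$, and matching operation types on the left and right copies of the swap or flip, thereby excluding the mixed cases SR and RS. The block equality $f[i..i+s-1] = f[j..j+s-1]$ finally reads off from the shift-$s$ overlap restricted to positions $\{i, \ldots, j-1\}$, which all lie away from the localized error at $j$.

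The main obstacle will be the precise bookkeeping of error positions in the forward direction when $O_j$ is a swap: a swap touches two adjacent positions at once, so iterating produces a shift-$2s$ pattern with up to four modified positions, and showing that this pattern lines up with exactly the two errors of the original shift-$r$ overlap --- ruling out the mixed types SR and RS --- requires a careful geometric argument on how swap errors cascade. In the backward direction, a secondary subtlety is that when $i > s + 1$, the block equality alone does not directly cover $[1, s]$, and one must iterate the shift-$r$ transport of the block equality along $f$, verifying along the way that the intermediate indices avoid the error positions $i$ and $j$.
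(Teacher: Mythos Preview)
Your plan is sound and the forward direction is essentially the paper's own argument: the paper also fixes an occurrence of $f$ at position $r_1+1$ in $\tilde\beta_r$, and by the same index-chasing used for $\tilde\alpha_r$ derives $j-i=r_1$ and $j-i=r-r_1$, hence $r$ even and $j-i=r/2$, with the block equality following. Your treatment is in one respect more explicit than the paper's, since you flag that the mixed types SR and RS must be ruled out as part of showing Condition$^\sim$; the paper's forward proof does not spell this step out.

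Where your route genuinely diverges is the backward direction. You propose to verify the occurrence at position $r/2+1$ character by character, chaining the shift-$r$ relation $f[k]=f[k+r]$ (away from $i,j$) with the block equality $f[m]=f[m+s]$ on $[i,j-1]$, and iterating when $i>s+1$. The paper instead extracts a closed structural form: under Condition$^\sim$ one has
\[
f=\rho(uw)^{k_1}uwuw\,\overline{u}\,w\,\overline{u}\,(w\overline{u})^{k_2}\sigma,
\]
with $|uw|=j-i=r/2$, $\rho$ a suffix and $\sigma$ a prefix of $w$ (here $u=x$ in the RR case, $u=x\overline{x}$ in the SS case). From this form, $\tilde\beta_r(f)$ is obtained by incrementing both exponents by one, so $f$ visibly sits inside it. Your approach avoids establishing this periodicity picture but pays for it with the iteration bookkeeping you anticipate; the paper's approach front-loads the work into proving the structural form, after which the occurrence is immediate. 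Both lead to the same conclusion.
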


\begin{proof} 
Suppose that  $f\in \{0, 1 \}^n$ has a $2$-tilde-error overlap that satisfies $Condition^{\sim}$. 
{Note that a $2$-tilde-error overlap of type RS or SR cannot satisfy $Condition^{\sim}$.}
Now, if
 the $2$-tilde-error overlap is of type RR, then the fact that $\tilde\beta_r(f)$ is not $f$-free can be shown as in the proof of Claim 2 of Lemma 2.2 in \cite{Wei17}. If 
the $2$-tilde-error overlap is of type SS, then that proof must be suitably modified. More precisely, let $i, j$, with $1\leq i< j\leq n-r$, be the error positions of the $2$-tilde-error overlap of shift $r$ that satisfies $Condition^{\sim}$. 

Let $f[i]=f[j]=x$, $f[i+r]=f[j+r]=\overline{x}$, $f[i+1]=f[j+1]=\overline{x}$ and $f[i+1+r]=f[j+1+r]=x$. 

It is possible to show that, for some $k_1, k_2 \geq 0$, we can write 

\begin{center}
$f= \rho (uw)^{k_1}uwuw\overline{u}w\overline{u}(w\overline{u})^{k_2}\sigma$  
\end{center}

where $u=x\overline{x}$, $w=f[i+2 .. j-1]$ ($w$ is empty, if $j=i+2$) and
$\rho$ and $\sigma$ are, respectively, a suffix and a prefix of $w$.
Now , we have 
\begin{center}
$\tilde\beta_r(f)=\rho (uw)^{k_1+1}uwuw\overline{u}w\overline{u}(w\overline{u})^{k_2+1}\sigma$
\end{center}
 and, hence, $\tilde\beta_r(f)$ is not $f$-free.

 Assume now that $\tilde\beta_r(f)$ is not $f$-free and suppose that a copy of $f$ occurs in $\tilde\beta_r(f)$ at position $r_1+1$. A reasoning similar to the one used in the proof of Lemma  \ref{l-alfa-tilde-f-free}, shows that, if  $i$ and $j$ are the error positions, then  
 $j-i = r_1$ and $j-i = r-r_1$. Hence $r=2 r_1$ is even and $j-i=r/2$. Therefore, $f[i+t]=f^b[i+t]= f[i+t+r/2]= f[j+t]$, for  $0 \leq t \leq r/2$, i.e. $f[i .. (i+r/2-1)] =f[j .. (j+r/2 -1)$ and the $2$-tilde-error overlap satisfies $Condition^{\sim}$. 

\end{proof}

In the rest of the section we deal with the construction of tilde-witnesses in  the 
 case of $2$-tilde-error overlaps. 
 We distinguish the cases of non-adjacent  and  adjacent errors.  Non-adjacent errors can be dealt with standard techniques, while the case of adjacent ones may show new issues. For example,
when $f$ has a $2$-error-overlap of type SR with error block $\bm{101}$ (aligned with $010$) then it can
be also considered  of type RS.

 Moreover, note that all the adjacent pairs of errors can be  listed as follows,
 up to complement and reverse. A 2-error overlap of type SS may have (error) block $\bm{1010}$ or $\bm{1001}$; of type SR or RS may have 
  block $\bm{100}$, $\bm{101}$ or $\bm{110}$;
   of type RR block $\bm{11}$ (block $\bm{10}$ aligned with $01$ corresponds to one swap). Note  that, for some error types, we need also to distinguish sub-cases related to  the different characters adjacent to those error blocks. We collect all the cases in the following proposition. 
For lack of space, the proof is detailed only in the  case 2. In the remaining cases, the proofs are 
sketched  by exhibiting a pair of words that can be shown to be a pair of tilde-witnesses.
\begin{theorem}\label{t-main}
Let $f \in \{0,1\}^n$. Any of the  following conditions, up to complement and reverse, is sufficient for $f$ being tilde-non-isometric.
\begin{enumerate}
    \item\label{case1}$f$ has a $2$-tilde-error overlap with not adjacent error positions
    \item\label{case2} $f$ has a $2$-tilde-error overlap of type SS with adjacent error positions
    \item\label{case3} $f$ has a $2$-tilde-error overlap with block $\bm{101}$ (of type SR or RS)
    \item\label{case4} $f$ has a $2$-tilde-error overlap with block $\bm{100}$ (of type SR or RS) in the particular case that $f=x\bm{100}1z=yx\bm{011}$, for some $x,y,z \in \{0,1\}^*$ 
    \item\label{case5} $f$ has a $2$-tilde-error overlap RR in the particular case that $f$ starts with $\bm{110}$ and ends with $\bm{100}$.
\end{enumerate}

\end{theorem}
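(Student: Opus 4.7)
The overall plan is, for each case, to exhibit an explicit pair of tilde-witnesses and verify the three conditions of Definition \ref{d-witnesses}. The natural candidate pair is $(\tilde\alpha_r(f), \tilde\beta_r(f))$ from Equation (\ref{eq:alfa_beta_tilde}), or a modification of it when Condition$^\sim$ forces $\tilde\beta_r$ to contain $f$. Condition 1 ($f$-freeness) is supplied by Lemmas \ref{l-alfa-tilde-f-free} and \ref{l-beta-tilde-f-free}. Condition 2 (distance at least $2$) is obtained by counting how many positions must differ between the two candidates, together with the fact that a single replacement changes one position and a single swap changes two. Condition 3 (no $f$-free minimal tilde-transformation) is where the real work lies: one must list every minimal tilde-transformation from $\tilde\alpha_r$ to $\tilde\beta_r$ and show each passes through a word containing $f$, using repeatedly the alignment of ${\rm pre}_l(f)$ with ${\rm suf}_l(f)$ given by the 2-tilde-error overlap.

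For Case \ref{case2}, with SS errors at adjacent positions $i,j=i+2$ of shift $r$, the plan is to set $\tilde\alpha_r={\rm pre}_r(f)\,S_i(f)$ and $\tilde\beta_r={\rm pre}_r(f)\,S_{i+2}(f)$, first under the assumption that Condition$^\sim$ fails. They differ precisely at positions $r+i,\dots,r+i+3$, which forces ${\rm dist}_\sim(\tilde\alpha_r,\tilde\beta_r)=2$, realized only by the two swaps $S_{r+i}$ and $S_{r+i+2}$ (in either order), since any use of replacements either changes too few positions or needs more than two operations, violating Lemma \ref{l-no-sovrapp}. In the order ``$S_{r+i}$ first'' the intermediate word is ${\rm pre}_r(f)\,f$, which contains $f$ from position $r+1$; in the order ``$S_{r+i+2}$ first'' the intermediate word is ${\rm pre}_r(f)\,S_iS_{i+2}(f)$, whose first $n$ symbols coincide with $f$, because the 2-tilde-error overlap guarantees that $S_iS_{i+2}$ applied to ${\rm pre}_l(f)$ yields ${\rm suf}_l(f)$. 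Hence condition 3 holds. The Condition$^\sim$ subcase (only possible for block $\bm{1010}$ with $r=4$) requires extending $\tilde\beta_r$ on the right by characters that break the forced periodic occurrence of $f$ and then repeating the argument on the adjusted pair; this is the most delicate step and is where the detailed proof of Case \ref{case2} concentrates. For the remaining cases one just exhibits suitable witnesses: Case \ref{case1} uses $(\tilde\alpha_r,\tilde\beta_r)$ with non-adjacent errors, so the two minimal tilde-transformations give intermediates ${\rm pre}_r(f)\,f$ and ${\rm pre}_r(f)\,O_iO_j(f)$, each containing $f$; Case \ref{case3} is analogous but the block $\bm{101}$ yields exactly two minimal tilde-transformations of mixed S--R type, both passing through a word containing $f$; Cases \ref{case4} and \ref{case5} rely on the extra structural equalities $f=x\bm{100}1z=yx\bm{011}$ and the prefix/suffix conditions $\bm{110}$/$\bm{100}$ respectively, which are exactly what is needed to ensure that the additional minimal tilde-transformations enabled by the block $\bm{100}$ (or by an RR block neighbouring a swap) still produce $f$ as a factor in every intermediate word.

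The main obstacle is the enumeration step in Case \ref{case2} and in the adjacent-error cases \ref{case3}--\ref{case5}: adjacent swap/replacement pairs create the multiplicity of minimal tilde-transformations highlighted in Remark \ref{r-ham-replacements}, so each ordering and each alternative operation on the disputed block must be treated separately, and the appearance of $f$ in each intermediate word must be located precisely via the overlap alignment. Within Case \ref{case2}, the Condition$^\sim$ subcase is the one that does not admit the generic witness pair and forces the explicit construction of an extended pair, which is where the argument becomes most technical.
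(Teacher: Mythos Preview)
Your treatment of Cases \ref{case1}--\ref{case3} follows the paper's approach and is essentially correct (including the observation that, for SS adjacent errors, Condition$^\sim$ can only occur with block $\bm{1010}$ and $r=4$). The genuine gap is in Cases \ref{case4} and \ref{case5}, where you implicitly keep the standard pair $(\tilde\alpha_r,\tilde\beta_r)$ and claim that the extra hypotheses on $f$ are there to make condition~3 go through. In Case \ref{case5} this cannot work: the adjacent RR block is $\bm{11}$ aligned with $\bm{00}$, so $\tilde\alpha_r={\rm pre}_r(f)R_i(f)$ carries the block $\bm{01}$ while $\tilde\beta_r={\rm pre}_r(f)R_{i+1}(f)$ carries $\bm{10}$; these two words differ by a single swap, hence ${\rm dist}_\sim(\tilde\alpha_r,\tilde\beta_r)=1$ and condition~2 of Definition \ref{d-witnesses} fails outright. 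The paper notes this explicitly and replaces the pair by a new one, $\tilde\alpha_r=w_2\bm{1010}w_1$ and $\B_r=w_2\bm{0101}w_1$ (with $f=\bm{110}w_1=w_2\bm{100}$), which inserts an extra symbol between the two aligned copies of $f$ so that the distance becomes $2$; the prefix/suffix hypotheses $\bm{110}/\bm{100}$ are used not to control intermediate words but to guarantee that this longer pair is well-defined and $f$-free.

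Case \ref{case4} has the same flavour. Here the standard pair does have distance $2$, but the block $\bm{100}\!\to\!\bm{011}$ admits \emph{four} minimal tilde-transformations (cf.\ Remark \ref{r-ham-replacements}), with intermediate blocks $\bm{100}$, $\bm{011}$, $\bm{110}$, $\bm{001}$, and the standard alignment produces $f$ only in the first two; the hypothesis $f=x\bm{100}1z=yx\bm{011}$ does not rescue the remaining two intermediates. The paper therefore abandons $(\tilde\alpha_r,\tilde\beta_r)$ and builds a pair on a four-letter block, $\tilde\alpha_r=w_1w_2\bm{0101}w_3$ and $\B_r=w_1w_2\bm{1010}w_3$, where the $1$ following the block $\bm{100}$ in $f$ is what allows this extended construction; proving $\B_r$ is $f$-free is singled out as requiring a separate argument. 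In short, the role of the extra structural hypotheses in Cases \ref{case4} and \ref{case5} is to enable an \emph{alternative} witness construction, not to salvage the generic one; your plan needs to incorporate that alternative pair and to verify its $f$-freeness directly.
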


\begin{proof}
We provide, for each case in the list, a pair of tilde-witnesses for $f$.
\smallskip
\\ \noindent
{\em Case 1.}
If the  $2$-tilde-error overlap does not
	satisfy $Condition^{\sim}$, following Definition \ref{d-witnesses}, one can prove that the pair $(\tilde\alpha_r, \tilde\beta_r)$ as in Equation (\ref{eq:alfa_beta_tilde}) is a pair of tilde-witnesses for $f$. {Otherwise,  one can prove that $(\tilde\eta_r, \tilde\gamma_r)$ with $\tilde\eta_r={\rm pre}_r(f) O_i(f){\rm suf}_{r/2}(f)$ and $\tilde\gamma_r={\rm pre}_r(f) O_j(O_t(f)){\rm suf}_{r/2}(f)$ is a pair of tilde-witnesses for $f$. }
	
	\smallskip
  \noindent {\em Case 2.}
Proved in Lemma \ref{l-incucchiati_SS}.

\noindent {\em Case 3.}
We have $f=w_2\bm{101}w_3w_4=w_1w_2\bm{010}w_3$, for some $w_1, w_2, w_3, w_4 \in \{0,1\}^*$. The  pair $(\tilde\alpha_r,\tilde{\beta}_r)$, with
$\tilde\alpha_r=w_1w_2\bm{011}w_3w_4$ and  $\tilde\beta_r=w_1w_2\bm{100}w_3w_4$, is a pair of tilde-witnesses, following Definition \ref{d-witnesses}.

\noindent {\em Case 4.} We have $f=w_2\bm{100}1w_3=w_1w_2\bm{011}$, for some $w_1, w_2, w_3\in \{0,1\}^*$. In this case we need a different technique to construct the pair of tilde-witnesses $(\tilde\alpha_r,\B_r)$. We set 
$\tilde\alpha_r=w_1w_2\bm{0101}w_3$ and  $\B_r=w_1w_2\bm{1010}w_3$.  Here we prove that $\B_r$ is f-free. Indeed, suppose that a copy of $f$ occurs in  $\B_r$ starting from position $r_1$. Some considerations, related to the definition of $\B_r$ and to the structure of $f$, show that either $r_1=2$ or $r_1=3$, and one can prove that this leads to a contradiction.

 \noindent {\em Case 5.}
We have
$f=\bm{110}w_1=w_2\bm{100}$, for some $w_1,w_2 \in \{0,1\}^*$. 
By following Definition \ref{d-witnesses}, one can prove that the pair $(\tilde\alpha_r,\B_r)$, with $\tilde\alpha_r=w_2\bm{1010}w_1$ and $\B_r=w_2\bm{0101}w_1$ is a pair of tilde-witnesses. Remark that, in such a case, the pair $(\tilde\alpha_r,\tilde\beta_r)$ of Equation \ref{eq:alfa_beta_tilde} is not a pair of tilde-witnesses because ${\rm dist}_\sim(\tilde\alpha_r,\tilde\beta_r)=1$. 

\end{proof}

Let us prove  in details that Case 2. of previous theorem holds.
\begin{lemma}\label{l-incucchiati_SS}
    If $f$ has a $2$-tilde-error overlap of type SS, where the errors are adjacent, then $f$  is tilde-non-isometric.
\end{lemma}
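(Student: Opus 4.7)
The plan is to construct a pair of tilde-witnesses directly from the two adjacent swap errors. With $j=i+2$, the block $f[i..i+3]$ must be one of $1010, 1001, 0101, 0110$, and Lemma~\ref{l-rev-comp} reduces the analysis to the cases $1010$ and $1001$. I would take as candidate the natural pair from Equation~(\ref{eq:alfa_beta_tilde}),
\[
\tilde\alpha_r = {\rm pre}_r(f)\,S_i(f), \qquad \tilde\beta_r = {\rm pre}_r(f)\,S_j(f).
\]
Lemma~\ref{l-alfa-tilde-f-free} already gives that $\tilde\alpha_r$ is $f$-free, and by Lemma~\ref{l-beta-tilde-f-free} the word $\tilde\beta_r$ is $f$-free exactly when $Condition^{\sim}$ fails. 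A short check shows that, for adjacent SS errors, $Condition^{\sim}$ can hold only in the narrow subcase $r=4$ with block $1010$ (up to reverse and complement); in every other situation both candidate words are already $f$-free.

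For the distance computation, $\tilde\alpha_r$ and $\tilde\beta_r$ agree outside the four positions $r+i, r+i+1, r+i+2, r+i+3$ and disagree at all four. Invoking Lemma~\ref{l-no-sovrapp}, any minimal tilde-transformation touches only those four positions, and a counting argument shows that two swaps are required. The only feasible pair is $\{S_{r+i}, S_{r+i+2}\}$: the middle swap $S_{r+i+1}$ either encounters equal adjacent characters (for block $1010$) or leaves two non-adjacent residual mismatches that cannot be closed by a single further operation (for block $1001$). Thus ${\rm dist}_\sim(\tilde\alpha_r, \tilde\beta_r) = 2$ and there are exactly two minimal tilde-transformations, the two orderings of $S_{r+i}$ and $S_{r+i+2}$.

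The heart of the argument is to show that both orderings pass through a word containing $f$. Applying $S_{r+i}$ first undoes the swap hidden inside $S_i(f)$, producing the intermediate word ${\rm pre}_r(f) \cdot f$, which contains $f$ starting at position $r+1$. Applying $S_{r+i+2}$ first yields ${\rm pre}_r(f) \cdot S_j(S_i(f))$; using the SS relations $f[r+k] = f[k]$ for $k \notin \{i, i+1, j, j+1\}$ together with $f[r+i] = f[i+1]$, $f[r+i+1] = f[i]$, $f[r+j] = f[j+1]$, $f[r+j+1] = f[j]$, a position-by-position check shows that the first $n$ symbols of this intermediate word reproduce $f$, so $f$ occurs as a factor starting at position $1$. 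Consequently no minimal tilde-transformation is $f$-free and $(\tilde\alpha_r, \tilde\beta_r)$ satisfies Definition~\ref{d-witnesses}, which handles the main subcase.

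The main obstacle is the narrow subcase where $Condition^{\sim}$ holds, since then $\tilde\beta_r$ is not $f$-free and the pair above fails. For this subcase I would imitate the variant $(\tilde\eta_r, \tilde\gamma_r)$ used in Case~\ref{case1} of Theorem~\ref{t-main}: append ${\rm suf}_{r/2}(f)$ to both words and replace one of the swaps by a composed operation, so that the appended suffix destroys the unwanted occurrence of $f$ in the $\tilde\beta$-analogue while the structural argument about the two forced orderings of swaps carries over unchanged.
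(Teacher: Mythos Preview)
Your approach is the paper's: the same pair $(\tilde\alpha_r,\tilde\beta_r)$, the same appeal to Lemmas~\ref{l-alfa-tilde-f-free} and~\ref{l-beta-tilde-f-free}, the same identification of the two orderings $S_{r+i}$-first / $S_{r+j}$-first producing $f$ as a suffix, respectively as a prefix. Your observation that for adjacent SS errors $Condition^{\sim}$ forces $r=4$ and block $\bm{1010}$ (equivalently, that it can never hold for block $\bm{1001}$, since there $f[i]\neq f[j]$) is exactly how the paper splits the two cases.

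The one place to fix is your sketch of the $Condition^{\sim}$ subcase. Appending ${\rm suf}_{r/2}(f)$ does \emph{not} destroy the unwanted copy of $f$ inside the $\tilde\beta$-analogue: that copy lives in the first $n+r$ characters, which the suffix leaves untouched. What removes it is the extra operation $O_t$ at $t=j+2$, which is why the paper sets $\tilde\gamma_r={\rm pre}_r(f)\,O_j(O_t(f))\,{\rm suf}_{r/2}(f)$. As a consequence ${\rm dist}_\sim(\tilde\eta_r,\tilde\gamma_r)=3$, not $2$, and the argument does \emph{not} carry over ``unchanged'': one must now verify that each of the three swaps $S_i,S_j,S_t$, when applied first to $\tilde\eta_r$, creates an occurrence of $f$. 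The appended ${\rm suf}_{r/2}(f)$ is there precisely for this third check---applying $S_t$ first makes $f$ appear aligned with that suffix.
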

\begin{proof}
Let $f\in \{0, 1 \}^n$ have a $2$-tilde-error overlap of shift $r$ and type SS, where the errors are adjacent. Then, two cases can occur (up to complement):

\noindent 
{\em Case 1:} $f=w_2\bm{1010}w_3w_4=w_1w_2\bm{0101}w_3$, $\vert w_1 \vert=r$
\\  
If the $2$-tilde-error overlap does not satisfy  Condition$^{\sim}$, then $(\tilde\alpha_r,\tilde{\beta}_r)$, with
$\tilde\alpha_r=w_1w_2\bm{0110}w_3w_4$ and  $\tilde\beta_r=w_1w_2\bm{1001}w_3w_4$,
is a pair of tilde-witnesses,
 following Definition \ref{d-witnesses}. In fact:
\begin{enumerate}
   \item $\tilde\alpha_r$ is $f$-free thanks to Lemma \ref{l-alfa-tilde-f-free} and $\tilde{\beta}_r$ is $f$-free, by Lemma \ref{l-beta-tilde-f-free}. 
    \item ${\rm dist}_\sim(\tilde\alpha_r, \tilde{\beta}_r)=2$, straightforward.
   \item a minimal tilde-transformation from $\tilde\alpha_r$ to $\tilde{\beta}_r$ consists of two swaps $S_i$ and $S_j$ with $i=\vert w_1w_2\vert+1$ and $j=i+2$. If $S_i$ is applied to $\tilde\alpha_r$ as first operation, then $f$ appears as a suffix, whereas if $S_j$ is applied first to $\tilde\alpha_r$, then $f$ appears as a prefix.
\end{enumerate}
If the $2$-tilde-error overlap satisfies Condition$^{\sim}$, then  $w_3=\bm{10}w^{\prime}_3$ and, following Definition \ref{d-witnesses}, $(\tilde{\eta}_r, \tilde{\gamma}_r)$ is a pair of tilde-witnesses, where $\tilde{\eta}_r=w_1w_2\bm{011010}w^{\prime}_3w_4w_5$ and $\tilde{\gamma}_r=w_1w_2\bm{100101}w^{\prime}_3w_4w_5$, with $w_5=suf_{r/2}(f)$. 
\begin{enumerate}

   \item One can prove that $\tilde{\eta}_r$ and $\tilde{\gamma}_r$ are $f$-free.
   \item ${\rm dist}_\sim(\tilde{\eta}, \tilde{\gamma})=3$.
 \item a minimal tilde-transformation from $\tilde{\eta}_r$ to $\tilde{\gamma}_r$ consists of three swap operations $S_i$, $S_j$ and $S_t$ with $i=\vert w_1w_2\vert+1$, $j=i+2$, $t=j+2$. 
 \\
 If $S_i$ is applied to $\tilde{\eta}_r$ as first operation, then $f$ occurs at position $|w_1|+1$, 
 if $S_j$ is applied first then 
 $f$ appears as a prefix,
 whereas if $S_t$ is applied first then $f$ appears as a suffix.   
\end{enumerate}

\noindent
{\em Case 2:}  $f=w_2\bm{1001}w_3w_4=w_1w_2\bm{0110}w_3,$ $\vert w_1 \vert=r$
\\
The pair $(\tilde{\alpha}_r,\tilde{\beta}_r)$, with
$\tilde{\alpha}_r=w_1w_2\bm{0101}w_3w_4$ and  $\tilde{\beta}_r=w_1w_2\bm{1010}w_3w_4$,
is a pair of tilde-witnesses,
 following Definition \ref{d-witnesses}. In such a case, by Lemma \ref{l-beta-tilde-f-free},  $\tilde{\beta}_r$ is $f$-free. In fact, the Condition$^\sim$ never holds, since $f[i]$ is different from $f[j]$. 

\end{proof}

 The following example uses Lemma \ref{l-incucchiati_SS}.

\begin{example}\label{e-2swap}
The word $f=\bm{1001}0110=1001\bm{0110}$ has a $2$-tilde-error overlap of type SS and shift $r=4$ in positions $1,3$. By Lemma \ref{l-incucchiati_SS}, the pair $(\tilde{\alpha}_4, \tilde{\beta}_4)$ with $\tilde{\alpha}_4=1001\bm{0101}0110$ and $\tilde{\beta}_4=1001\bm{1001}0110$ is a pair of tilde-witnesses. Then $f$ is tilde-non-isometric.
Note that $f$ is Ham-isometric.
\end{example}

Theorem \ref{t-main} lists the conditions for a word $f$ being tilde non-isometric and the proof provides all the corresponding pairs of witnesses. The  construction of $(\tilde\alpha_r, \tilde\beta_r)$ and $(\tilde\eta_r, \tilde\gamma_r)$, used so far, is inspired by an analogous construction for the Hamming distance (cf. \cite{Wei17}) and it is here adapted to the tilde-distance. On the contrary, in the cases \ref{case4} and \ref{case5} a new construction is needed because the usual pair of witnesses does not satisfy any more Definition \ref{d-witnesses}. 
{The construction of 
$\B_r$ is peculiar of the tilde-distance. It solves the situation expressed in Remark \ref{r-swap-a-cavallo} when
a mismatch error 
may appear as caused by a replacement, but it is actually caused by a hidden swap involving adjacent positions.
} 

 \smallskip
In conclusions, the swap and mismatch distance we adopted in this paper opens up new scenarios and presents interesting new situations that surely deserve further investigation.

\bibliographystyle{plain}
  \bibliography{references}

 \end{document}